\newcommand{\ket}[1]{\vert #1 \rangle}
\newcommand{\bra}[1]{\langle #1 \vert}
\newcommand\proj[1]{\vert #1 \rangle \langle #1 \vert}
\newcommand{\opn}[1]{\operatorname{#1}}
\DeclareMathOperator{\tr}{Tr}  
\newcommand{\renyi}{R\'enyi~}
\newcommand{\density}[1]{\mathscr{D}(#1)}
\newcommand{\pos}[1]{\mathscr{P}(#1)}
\newcommand{\linear}[1]{\mathscr{L}(#1)}
\newcommand{\1}{\mathbbm{1}}
\DeclareMathOperator{\Collision}{H_{2}}
\newcommand{\Ppg}{\operatorname{P}^{\operatorname{pg}}}
\newcommand{\Pro}{\operatorname{P}}
\newcommand{\Fid}{\operatorname{F}}
\newcommand{\Fpg}{\operatorname{F}^{\operatorname{pg}}}
\newcommand*{\cB}{\mathcal{B}}
\newcommand*{\cH}{\mathcal{H}}
\newcommand*{\cM}{\mathcal{M}}
\newcommand*{\cP}{\mathcal{P}}
\newcommand*{\cQ}{\mathcal{Q}}
\newcommand*{\cR}{\mathcal{R}}
\newcommand*{\bF}{\mathbb{F}}
\newtheorem{theorem}{Theorem}
\newtheorem{lemma}[theorem]{Lemma}
\begin{document}

\title{Uncertainty Relations in the Presence of Quantum Memory for Mutually Unbiased Measurements}
\author{Kun Wang}
\email{wk@smail.nju.edu.cn}
\author{Nan Wu}
\email[Corresponding author:~]{nwu@nju.edu.cn}
\author{Fangmin Song}
\email[Corresponding author:~]{fmsong@nju.edu.cn}
\affiliation{State Key Laboratory for Novel Software Technology,
            Department of Computer Science and Technology, Nanjing University, Nanjing 210093, China}
\date{\today}
\begin{abstract}
In~\cite{berta2014entanglement}, uncertainty relations in the presence of quantum memory was formulated for mutually
unbiased bases using conditional collision entropy. In this paper, we generalize their results to the mutually unbiased
measurements. Our primary result is an equality between the amount of uncertainty for a set of measurements and the
amount of entanglement of the measured state, both of which are quantified by the conditional collision entropy.
Implications of this equality relation are discussed. We further show that similar equality relation can be obtained
for generalized symmetric informationally complete measurements. We also derive an interesting equality for arbitrary
orthogonal basis of the space of Hermitian, traceless operators.
\end{abstract}

\maketitle

\section{Introduction}


Uncertainty relations form a central part of quantum mechanics. They impose fundamental limitations on our ability to
simultaneously predict the outcomes of noncommuting observables. Different approaches have been proposed to quantify
these relations. The original formulation is given by Heisenberg~\cite{heisenberg1927w} in terms of standard deviations
for momentum and position operators. His result is then generalized to two arbitrary
observables~\cite{robertson1929uncertainty}.
Later it is recognized that one can express uncertainty relations in terms of 
entropies~\cite{bialynicki1975uncertainty,deutsch1983uncertainty,maassen1988generalized}. 
In this approach, entropy functions like the Shannon and \renyi entropies are
used to quantify uncertainty (Ref.~\cite{coles2017entropic} is a nice survey on this topic).

Mutually unbiased bases (MUB) have many applications in quantum information theory: quantum error correction
codes~\cite{spengler2013graph}, quantum cryptography~\cite{cerf2002security}, and entanglement
detection~\cite{spengler2012entanglement} (see review~\cite{durt2010mutually} and references therein). There has been
of great effort and research interest in constructing the complete set of MUB. However, The existence problem of
complete set of mutually unbiased bases for arbitrary dimension is still open. In~\cite{kalev2014mutually}, the authors
proposed the concept of mutually unbiased measurements (MUM). These measurements contain the complete set of MUBs as a
special case while the measurement operators need not be rank one projectors. They proved that a complete set of
mutually unbiased measurements can be built explicitly for arbitrary finite dimension.


Uncertainty relations in the presence of quantum memory was formulated for MUBs using conditional collision entropy
in~\cite{berta2014entanglement}, in which the authors gave an exact relation between the amount of uncertainty as
measured by the guessing probability and the amount of entanglement as measured by the recoverable entanglement
fidelity. As MUMs are natural generalizations of MUBs, one may naturally conjecture that similar uncertainty relations
hold for MUMs. In this paper, we show that this is indeed the case: we generalize their results to the set of MUMs. The
main result is an equality between the amount of uncertainty for a set of measurements and the amount of entanglement
of the measured state, both of which are quantified by the conditional collision entropy. 

The rest of this paper is organized as follows. In Sec.~\ref{sec:preliminaries}, we establish the notation and briefly
review the concepts of MUMs, conical 2-design, and conditional collision entropy. In
Sec.~\ref{sec:mum}, we present our central result --- 
an equality between the amount of uncertainty
for a complete set of MUMs and the amount of entanglement of the measured state, both of which are
quantified by the conditional collision entropy.
We discuss several implications of this
equality relation. We further show that an equality relation can be obtained for generalized symmetric
informationally complete measurements in Sec.~\ref{sec:sim}. We conclude in Sec.~\ref{sec:conclusion}. Some proofs are
given in the Appendix.
\section{Preliminaries}\label{sec:preliminaries}

A quantum system $A$ is associated to a Hilbert space
$\cH_A$ with some fixed orthonormal basis $\{\ket{s}\}$.
Throughout this article, we assume $\cH_A$ is $d$-dimensional.
If the underlying system is clear from context, we simply write the space as $\cH$.
We denote by $\linear{\cH}$ the set of linear operators,
by $\pos{\cH}$ the set of positive semidefinite operators, and by $\density{\cH}$ the set of density operators on
$\cH$. We use $\1_A$ to represent the identity operator and $\pi_A$ to represent the maximally mixed operator of system
$A$. Systems with the same letter are assumed to be isomorphic: $A^\prime \cong A$. We denote by
$\ket{\Psi_{AA^\prime}}$ the normalized maximally entangled state on system $AA^\prime$, which has the form
$\ket{\Psi_{AA^\prime}} = \sum_s \ket{ss}$. For simplicity, we let $[d] = \{1, \cdots, d\}$.

\subsection{Mutually unbiased measurements}
Two orthonormal bases $\cB^{(1)} = \{\ket{\psi_{x}^{(1)}}\}_{x\in[d]}$ and
$\cB^{(2)} = \{\ket{\psi_{x}^{(2)}}\}_{x\in[d]}$ of $\cH$ are said to be \textit{mutually unbiased} if
\[
\vert \langle \psi_{x}^{(1)} \vert \psi_{y}^{(2)} \rangle \vert = \frac{1}{\sqrt{d}},
\forall x,y \in [d].
\]
Intuitively, if $\cH$ is prepared in an eigenstate of $\cB^{(1)}$ and measured in $\cB^{(2)}$, the measurement outcome
is completely random. A set of orthonormal bases $\{\cB^{(\theta)}\}_{\theta\in\Theta}$ forms a set of MUBs if these
bases are pairwise unbiased. In a $d$-dimensional Hilbert space there are at most $d+1$ pairwise unbiased
bases~\cite{durt2010mutually}. This set is called a \textit{complete set} of MUBs. It is open whether complete set of
MUBs exists for arbitrary $d$.

By generalizing the notion of ``unbiasedness'', the concept of mutually unbiased measurements (MUM) is
introduced~\cite{kalev2014mutually}. Two POVMs $\cP^{(1)} = \{P_{x}^{(1)}\}_{x\in[d]}$ and $\cP^{(2)} =
\{P_{x}^{(2)}\}_{x\in[d]}$ are mutually unbiased if the following conditions are satisfied
for all $x,x^\prime\in[d], \theta=1,2$:
\begin{align*}
\tr\left[ P_{x}^{(\theta)} \right] &=  1,  \\ 
\tr\left[ P_{x}^{(1)}P_{x}^{(2)} \right] &=  \frac{1}{d}, \\ 
\tr\left[ P_{x}^{(\theta)}P_{x^\prime}^{(\theta)} \right] &=  
\delta_{x,x^\prime}\kappa + (1-\delta_{x,x^\prime})\frac{1-\kappa}{d-1}, 
\end{align*}
where the \textit{efficiency parameter} $\kappa$ satisfies $1/d < \kappa \leq 1$. $\kappa$ determines how close the
measurements operators are to rank-one projectors: $\kappa = 1$ if and only if $\cP^{(1)}$ and $\cP^{(2)}$ form two
MUBs. Unlike the existence problem of a complete set of MUBs, there exists a general construction of complete set of
MUMs for arbitrary finite $d$~\cite{kalev2014mutually}. Let $\{F_k\}_{k\in[d^2-1]}$ be an orthogonal basis for the
space of Hermitian, traceless operators acting on $\cH$. We regard these operators as elements of a $(d+1)\times(d-1)$
block matrix
\[
\begin{pmatrix}
  F_1 & F_2 & \cdots & F_{d-1} \\
  F_{d} & F_{d+1} & \cdots & F_{2(d-1)} \\
  \vdots & \vdots & \ddots & \vdots \\
  F_{d(d-1)+1} & F_{d(d-1)+2} & \cdots & F_{(d+1)(d-1)}
\end{pmatrix}.
\]
We relabel the block matrix by a tuple $(x,\theta):x\in[d-1],\theta\in[d+1]$ based on their (column, row) location
\[
\begin{pmatrix}
  F_{1,1} & F_{2,1} & \cdots & F_{d-1,1} \\
  F_{1,2} & F_{2,2} & \cdots & F_{d-1,2} \\
  \vdots & \vdots & \ddots & \vdots \\
  F_{1,d+1} & F_{2,d+1} & \cdots & F_{d-1,d+1}
\end{pmatrix}.
\]
Based on $\{F_{x,\theta}\}$, define the following $d(d+1)$ operators 
\begin{equation}\label{eq:f_x_theta}
F_x^{(\theta)} =
  \begin{cases}
    F^{(\theta)} - d(d+\sqrt{d})F_{x,\theta}, &x\in[d-1], \\
    (1+\sqrt{d})F^{(\theta)}, & x = d,
  \end{cases}
\end{equation}
where $F^{(\theta)} = \sum_{x=1}^{d-1} F_{x,\theta}$. Then, the operators
\begin{equation}\label{eq:mum-operator}
  P_{x}^{(\theta)} = \frac{1}{d}\1 + tF_x^{(\theta)}, x\in[d], \theta\in[d+1]
\end{equation}
form a complete set of MUMs, with the parameter $t$ chosen such that $P_{x}^{(\theta)} \geq 0$.
The efficiency parameter $\kappa$ is then given by
\begin{equation}\label{eq:t-kappa}
  \kappa = \frac{1}{d} + t^2\left(1+\sqrt{d}\right)^2(d-1).
\end{equation}

\subsection{Conical 2-design}

A \textit{complex projective 2-design} is a set of vectors $\{\ket{\psi_x}\}_{x\in \Sigma}$ 
(not necessarily normalized) lying in $\cH_A$ such that~\cite{berta2014entanglement}
\[
\frac{1}{\vert \Sigma \vert}\sum_{x \in \Sigma} \proj{\psi_x}\otimes\proj{\psi_x}
= \frac{1}{d(d+1)} \left( \1_{AA^\prime} + \bF_{AA^\prime} \right),
\]
where $\bF_{AA^\prime}$ is the swap operator defined as
\begin{equation}\label{eq:swap-operator}
\bF_{AA^\prime} = \sum_{s,t} \ket{s}\bra{t} \otimes \ket{t}\bra{s}.
\end{equation}
Complex projective designs play an important role in quantum information theory.
A best known example is the complete set of MUBs.
Let $\{\cB^{(\theta)}\}_{\theta\in[d+1]}$ be a complete set of MUBs (if exists) on $\cH_A$. 
It is proved in~\cite{klappenecker2005mutually} that such a set generates a complex projective 2-design:
\begin{equation}\label{eq:mub-two-design}
\sum_{\theta=1}^{d+1}\sum_{x=1}^{d} \proj{\psi_{x}^{(\theta)}} \otimes \proj{\psi_{x}^{(\theta)}}
= \1_{AA^\prime} + \bF_{AA^\prime}.
\end{equation}

A complex projective $2$-design consists of rank-one projectors. In~\cite{graydon2016quantum} the authors introduce a
generalization which shares properties with complex projective $2$-design, but in which the projectors are arbitrary
positive semi-definite operators. A \textit{conical 2-design} is a set of positive semidefinite operators
$\{A_x\}_{x\in\Sigma}$ in $\cH_A$ satisfying
\[
\sum_{x\in\Sigma} A_x \otimes A_x = k_+ \1_{AA^\prime} + k_{-} \bF_{AA^\prime}
\]
for some $k_{+} \geq k_{-} \geq 0$.
As MUMs are generalizations of MUBs, we wish similar property (that complete set of MUBs forms a complex projective
2-design) holds for MUMs. In~\cite{graydon2016quantum}, it is proved that a complete set of MUMs forms a conical
2-design
\begin{equation}\label{eq:mum-generalization}
\sum_{\theta=1}^{d+1}\sum_{x=1}^d P_{x}^{(\theta)} \otimes P_{x}^{(\theta)}
= f(\kappa)\1_{AA^\prime} + g(\kappa)\bF_{AA^\prime},
\end{equation}
where the coefficients are given by
\begin{equation}\label{eq:coefficients}
f(\kappa) =  1+ \frac{1-\kappa}{d-1}, \quad
g(\kappa) = \frac{\kappa d - 1}{d-1}.
\end{equation}
Eq.~\eqref{eq:mum-generalization} can be viewed as a generalization of Eq.~\eqref{eq:mub-two-design}.

\subsection{Conditional collision entropy}

We use conditional collision entropy as measure of uncertainty. Let $\rho_{AB}\in\density{\cH_A\otimes\cH_B}$ be a
quantum state, the \textit{conditional collision entropy} is defined as~\cite{tomamichel2015quantum}
\begin{equation}
\Collision(A{\vert}B)_\rho 
= -\log\tr[\rho_{AB}(\1_A\otimes\rho_B)^{-1/2}\rho_{AB}(\1_A\otimes\rho_B)^{-1/2}].
\end{equation}
Trivializing system $B$, we get the collision entropy of single system:
$\Collision\left(A\right)_\rho = -\log\tr\rho_A^2$.

Collision entropy admits nice operational interpretations. Let $\rho_{XB} = \sum_x \eta_x \proj{x} \otimes \rho_x$ be a
classical quantum state shared between Alice and Bob. From Bob's view, he owns a state ensemble $\{\eta_x, \rho_x\}$.
He may use the \textit{pretty good} measurement $\cM^{\opn{pg}} = \{M_y\}$ associated with $\rho_{XB}$ to extract
information about index $x$. The measurement operators are given by
$M_y = \rho_B^{-1/2} (\eta_y\rho_y) \rho_B^{-1/2}$,
where $\rho_B = \tr_X \rho_{XB} = \sum_x \eta_x \rho_x$.
Denote by $\Pro^{\opn{pg}}\left(X \vert B \right)_\rho$ the probability
that he can correctly guess the index $x$ on average, then
\[
\Pro^{\opn{pg}}\left(X \vert B \right)_\rho = \sum_x \eta_x \tr\left[M_x \rho_x\right].
\]
It is proved in~\cite{buhrman2008possibility} 
that $\Collision(X \vert B )_\rho$ has the following operational interpretation:
\begin{equation*}
\Pro^{\opn{pg}}\left(X \vert B \right)_\rho = 2^{- \Collision(X \vert B )_\rho}.
\end{equation*}
Now we consider the fully quantum conditional collision entropy. Given state $\rho_{AB}$, the \textit{pretty-good
recoverable entanglement fidelity} quantifies how well 
the local pretty-good recovery map $\cR^{\opn{pg}}_{B\to A^\prime}$ (defined in~\cite{berta2014entanglement})
can bring $\rho_{AB}$ to $\ket{\Psi_{AA^\prime}}$:
\begin{equation*}
  \Fid^{\opn{pg}}\left(A \vert B\right)_\rho
  = d_A \Fid\left( (\1_A\otimes\cR^{\opn{pg}}_{B\to A^\prime})\rho_{AB}, \proj{\Psi_{AA^\prime}}\right),
\end{equation*}
where $\Fid(\rho,\sigma)=(\tr\sqrt{\sqrt{\sigma}\rho\sqrt{\sigma}} )^2$ is Uhlmann's
fidelity~\cite{nielsen2011quantum}. It is proved in~\cite{berta2014entanglement} that $\Collision(A \vert B )_\rho$ has
the following operational interpretation:
\begin{equation*}
\Fid^{\opn{pg}}\left(A \vert B\right)_\rho = 2^{- \Collision(A \vert B )_\rho}.
\end{equation*}

\section{Uncertainty relations for complete set of MUMs}\label{sec:mum}

In this section, we present uncertainty relations in the presence of memory for a complete set of mutually unbiased
measurements. The main result is an equality quantifying the relation between uncertainty and entanglement,
both of which are measured by conditional collision entropy.

Let $\cP^{(\theta)} = \{P_{x}^{(\theta)}\}_{x\in[d]}$ be a MUM in $A$
and $\rho_{AB}$ be quantum state on $AB$.
Measuring $\rho_{AB}$ on $A$ by $\cP^{(\theta)}$, we arrive at a classical-quantum state
\begin{equation}\label{eq:omega-x-theta-b}
\omega_{X^{(\theta)}B} = \sum_{x=1}^{d} 
\proj{x} \otimes \tr_A\left[ \left(P_x^{(\theta)} \otimes \1\right) \rho_{AB} \right],
\end{equation}
where $\cH_X$ is a $d$-dimensional Hilbert space with $\{\ket{x}\}$ being its standard basis.
The classical register $X$ indicates which measurement operator is performed;
$\tr_A[P_x^{(\theta)}\rho_{AB}]$ is the post-measurement state 
(unnormalized) left in system $B$, conditioned on the measurement operator performed;
and $\tr[P_x^{(\theta)}\rho_{AB}]$ is probability that the measurement outcome is $x$.
We remark that the choice of $\{\ket{x}\}$ does not affect our result as long as it forms an 
orthonormal basis of $\cH_X$.

Uncertainty relations study the unpredictability about the outcomes of many incompatible measurements.
Thus in the following, we will not only measure in one fixed MUM, but with equal probability in one of $d + 1$ MUMs. 
Let $\{\cP^{(\theta)}\}_{\theta\in[d+1]}$ be a complete set of MUMs on system $A$, we define the following 
classical-quantum state
\begin{align}
\omega_{XB\Theta} &= \frac{1}{d+1}\sum_{\theta=1}^{d+1} \sum_{x=1}^{d} \proj{x}_X \notag \\
&~ \quad\otimes \tr_A\left[ \left(P_x^{(\theta)} \otimes \1_B\right) \rho_{AB} \right] \otimes \proj{\theta}_\Theta,
\label{eq:cq-state}
\end{align}
where $\Theta$ is an indicator specifying which measurement has been performed.
The collision entropy of $\omega_{XB\Theta}$, with partition $X{:}B\Theta$, can be expressed as
\begin{equation}
\Collision\left(X\vert B\Theta\right)_\omega
=- \log\left( 
\frac{1}{d+1}\sum_{\theta,x}
\tr_B\left\{\tr_A[P_x^{(\theta)}\widetilde{\rho}_{AB}]^2\right\}\right),
\label{eq:classical-quantum-collision}
\end{equation}
where $\widetilde{\rho}_{AB} = \rho_B^{-1/4}\rho_{AB}\rho_B^{-1/4}$.
For the proof of Eq.~\eqref{eq:classical-quantum-collision}, see Appx.~\ref{appx:classical-quantum-collision}.
Under this convention, the conditioned collision entropy of $\rho_{AB}$ can be rewritten as
$\Collision\left(A {\vert} B\right)_\rho = -\log\tr[\widetilde{\rho}_{AB}^2]$.
We are now ready to present our main result.
\begin{theorem}\label{thm:mum-ur}
Let $\{\cP^{(\theta)}\}_{\theta\in[d+1]}$ be a complete set of MUMs on system $A$.
For arbitrary quantum state $\rho_{AB}$, it holds that
\begin{equation}\label{eq:complete-eq}
\Collision\left(A\vert B\Theta\right)_\omega
= \log\left(d + 1 \right) 
- \log\left(f(\kappa) + g(\kappa)2^{-\Collision\left(A\vert B\right)_\rho} \right),
\end{equation}
where where $\omega_{XB\Theta}$ is defined in Eq.~\eqref{eq:cq-state},
$f(\kappa)$ and $g(\kappa)$ are defined in Eq.~\eqref{eq:coefficients}.
\end{theorem}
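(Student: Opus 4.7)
The plan is to combine the explicit formula for the conditional collision entropy of the classical-quantum state given in Eq.~\eqref{eq:classical-quantum-collision} with the conical $2$-design property of a complete set of MUMs stated in Eq.~\eqref{eq:mum-generalization}. Exponentiating the target identity Eq.~\eqref{eq:complete-eq} and using $\tr[\widetilde{\rho}_{AB}^2] = 2^{-\Collision(A\vert B)_\rho}$, it suffices to establish the scalar identity
\[
\sum_{\theta=1}^{d+1}\sum_{x=1}^{d} \tr_B\bigl\{\tr_A[(P_x^{(\theta)}\otimes\1_B)\widetilde{\rho}_{AB}]^2\bigr\} = f(\kappa) + g(\kappa)\tr[\widetilde{\rho}_{AB}^2].
\]
The advantage of this reformulation is that after one swap-trick manipulation, the left-hand side becomes linear in $P_x^{(\theta)}\otimes P_x^{(\theta)}$, so that Eq.~\eqref{eq:mum-generalization} can be applied directly.

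Second, I would carry out the swap trick on two copies of the underlying systems. Introducing an isomorphic copy $A'B'$ of $AB$ and using $\tr[M^2]=\tr[(M\otimes M)\bF_{BB'}]$ for any operator $M$ on $\cH_B$, each summand can be rewritten as
\[
\tr_{AA'BB'}\bigl[\bigl(P_x^{(\theta)}\otimes P_x^{(\theta)}\bigr)\otimes\bF_{BB'}\cdot\bigl(\widetilde{\rho}_{AB}\otimes\widetilde{\rho}_{A'B'}\bigr)\bigr],
\]
where the partial traces over $A$ and $A'$ have been absorbed into the joint trace. Summing over $\theta$ and $x$, Eq.~\eqref{eq:mum-generalization} collapses $\sum_{\theta,x}P_x^{(\theta)}\otimes P_x^{(\theta)}$ to $f(\kappa)\1_{AA'}+g(\kappa)\bF_{AA'}$, and the overall expression splits into two pieces.

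For the $f(\kappa)$ piece, the identity on $AA'$ reduces the trace to $\tr_{BB'}[\bF_{BB'}((\tr_A\widetilde{\rho}_{AB})\otimes(\tr_{A'}\widetilde{\rho}_{A'B'}))]$. Since $\rho_B^{-1/4}$ acts only on $B$, it commutes with the partial trace over $A$, and so $\tr_A\widetilde{\rho}_{AB}=\rho_B^{-1/4}\rho_B\rho_B^{-1/4}=\rho_B^{1/2}$; a reverse application of the swap trick then gives $\tr_{BB'}[\bF_{BB'}(\rho_B^{1/2}\otimes\rho_B^{1/2})]=\tr[\rho_B]=1$, contributing $f(\kappa)$. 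For the $g(\kappa)$ piece, the tensor $\bF_{AA'}\otimes\bF_{BB'}$ is exactly the swap on the composite system $(AB)(A'B')$, so a second swap-trick contraction yields $g(\kappa)\tr[\widetilde{\rho}_{AB}^2]$. Assembling the two contributions and taking $-\log\bigl(\tfrac{1}{d+1}(\cdot)\bigr)$ recovers Eq.~\eqref{eq:complete-eq}.

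I do not foresee a conceptual obstacle: the argument essentially repackages the MUB calculation of \cite{berta2014entanglement} using the stronger conical $2$-design identity of \cite{graydon2016quantum}, with $f(\kappa)$ and $g(\kappa)$ replacing the coefficients $1,1$ of Eq.~\eqref{eq:mub-two-design}. The main nuisance is the bookkeeping of four tensor factors through the successive swap manipulations, together with a routine sanity check that the pseudo-inverse $\rho_B^{-1/4}$ is taken on the support of $\rho_B$ (the standard convention implicit in $\Collision(A\vert B)_\rho$), so that the crucial simplification $\tr_A\widetilde{\rho}_{AB}=\rho_B^{1/2}$ is valid.
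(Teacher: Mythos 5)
Your proposal is correct and follows essentially the same route as the paper's proof: reduce via Eq.~\eqref{eq:classical-quantum-collision} to a scalar identity, apply the swap trick on a doubled system, invoke the conical $2$-design identity Eq.~\eqref{eq:mum-generalization}, and evaluate the $\1_{AA^\prime}$ and $\bF_{AA^\prime}$ contributions to $1$ and $\tr[\widetilde{\rho}_{AB}^2]$ respectively. The only cosmetic difference is that you contract the $g(\kappa)$ term by recognizing $\bF_{AA^\prime}\otimes\bF_{BB^\prime}$ as the swap on the composite system, where the paper expands $\bF_{AA^\prime}$ in a basis; the two computations are equivalent.
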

\begin{proof}
The proof is similar to the proof outlined in \textbf{Appendix B} of~\cite{berta2014entanglement}.
We introduce the spaces $A^\prime \cong A$ and $B^\prime \cong B$,
as well as the state $\widetilde{\rho}_{A^\prime B^\prime} \cong \widetilde{\rho}_{AB}$. Then we have
{
\small
\begin{align}
&\quad\; (d+1)2^{-\Collision\left(X\vert B\Theta\right)_\omega}\notag\\
&= \sum_{\theta,x}
    \tr_B\left\{  \tr_A[P_x^{(\theta)}\widetilde{\rho}_{AB}]^2 \right\}\notag\\
&= \sum_{\theta,x}\tr_{BB^\prime}\tr_{AA^\prime}
    \left[ \left(P_x^{(\theta)} \otimes P_x^{(\theta)}\right)
            \left(\widetilde{\rho}_{AB} \otimes \widetilde{\rho}_{A^\prime B^\prime} \right) \bF_{BB^\prime} \right] 
    \notag\\
&= \tr_{BB^\prime}\tr_{AA^\prime}
    \left[ \left( \sum_{\theta,x} P_x^{(\theta)} \otimes P_x^{(\theta)}\right)
            \left(\widetilde{\rho}_{AB} \otimes \widetilde{\rho}_{A^\prime B^\prime} \right) \bF_{BB^\prime} \right] 
    \notag\\
&= \tr_{BB^\prime}\tr_{AA^\prime}
    \left[ \left( f(\kappa)\1_{AA^\prime} + g(\kappa) \bF_{AA^\prime}\right)
    \left(\widetilde{\rho}_{AB} \otimes \widetilde{\rho}_{A^\prime B^\prime} \right) \bF_{BB^\prime} \right] 
    \notag\\
&= f(\kappa) \tr_{BB^\prime}\tr_{AA^\prime}
    \left[ \left(\widetilde{\rho}_{AB} \otimes \widetilde{\rho}_{A^\prime B^\prime} \right) \bF_{BB^\prime} \right] 
    \label{eq:term1}\\
&+ g(\kappa) \tr_{BB^\prime}\tr_{AA^\prime}
    \left[ \bF_{AA^\prime} \left(\widetilde{\rho}_{AB} \otimes \widetilde{\rho}_{A^\prime B^\prime} \right) 
            \bF_{BB^\prime} \right].\label{eq:term2}
\end{align}}In the second equality, we use the ``swap trick'': for operators $M,N\in\linear{\cH_B}$,
it holds that $\tr[MN] = \tr[(M\otimes N)\bF_{BB^\prime}]$. In detail, we choose
$M \equiv N \equiv \tr_A[P_x^{(\theta)}\widetilde{\rho}_{AB}]$. Then
\begin{align*}
&\quad\; \tr_B[MN] \\
&= \tr_{BB^\prime}\left[ \left( M \otimes N\right)\bF_{BB^\prime}\right]  \\
&=  \tr_{BB^\prime AA^\prime}
\left[  \left(P_x^{(\theta)}\widetilde{\rho}_{AB} \otimes 
        P_x^{(\theta)}\widetilde{\rho}_{A^\prime B^\prime}\right)\bF_{BB^\prime}\right] \\
&=  \tr_{BB^\prime AA^\prime}
\left[  \left(P_x^{(\theta)}\otimes P_x^{(\theta)}\right)
        \left(\widetilde{\rho}_{AB} \otimes \widetilde{\rho}_{A^\prime B^\prime}\right)\bF_{BB^\prime}\right].
\end{align*}
In the forth equality, we use the fact that complete set of MUMs forms a conical 2-design (see
Eq.~\eqref{eq:mum-generalization}). Now we compute the two terms given in Eqs.~\eqref{eq:term1} and~\eqref{eq:term2}.
For the first term, we have
\begin{align*}
&\quad\; \tr_{BB^\prime}\tr_{AA^\prime}
\left[ \left(\widetilde{\rho}_{AB} \otimes \widetilde{\rho}_{A^\prime B^\prime} \right) \bF_{BB^\prime} \right] \\
&= \tr_B \left[ \tr_A\left( \widetilde{\rho}_{AB} \right)\tr_A\left(\widetilde{\rho}_{AB}\right)  \right] \\
&= \tr_B \left[ \rho_B^{1/2}\rho_B^{1/2} \right] = 1.
\end{align*}
For the second term, we have
\begin{align*}
&\quad\; \tr_{BB^\prime}\tr_{AA^\prime}
    \left[\bF_{AA^\prime} \left(\widetilde{\rho}_{AB} \otimes \widetilde{\rho}_{A^\prime B^\prime} \right) 
          \bF_{BB^\prime} \right] \\
&= \sum_{ts} \tr_{BB^\prime}\tr_{AA^\prime}
\left[  (\ket{t}\bra{s}\otimes\ket{s}\bra{t}) 
        \left(\widetilde{\rho}_{AB} \otimes \widetilde{\rho}_{A^\prime B^\prime} \right)
        \bF_{BB^\prime} \right] \\
&= \sum_{ts}\tr_{B}\left[\bra{s}\widetilde{\rho}_{AB}\ket{t}\bra{t}\widetilde{\rho}_{AB}\ket{s}\right] \\
&= \tr\left[\widetilde{\rho}_{AB}^2\right] = 2^{-\Collision\left(A\vert B\right)_\rho}.
\end{align*}
Combining these results, we reach at
\begin{equation}\label{eq:complete-eq-2}
(d+1)2^{-\Collision\left(X\vert B\Theta\right)_\omega}
= f(\kappa) + g(\kappa)2^{-\Collision\left(A\vert B\right)_\rho}.
\end{equation}
Rearranging the elements, we get Eq.~\eqref{eq:complete-eq}.
\end{proof}
Following, we discuss several implications of Thm.~\ref{thm:mum-ur}:
its relation to the guessing games,
its relation to the uncertainty relations expressed in bounds on sum of entropies,
and its application in entanglement detection.
These implications can help us gain further intuition about relation~\eqref{eq:complete-eq}.

\subsection{Guessing games}

Note that the conditional collision entropy admits an operational interpretation in terms of guessing games. Now we
consider a game suited to the above MUMs situation. Bob prepares a state $\rho_{AB}$ and sends the $A$ system to Alice.
She measures $A$ in one measurement randomly chosen from the complete set of MUMs, and then tells Bob which measurement
has been performed (index $\theta$). Bob's task is to guess Alice's outcome (index $x$) using the pretty-good
measurements on $B$. Thm.~\ref{thm:mum-ur} can be understood as saying that
Bob's ability to correctly guess the outcome is quantitatively connected to the
pretty-good recoverable entanglement fidelity that can be achieved by Bob. This is summarized as follows.
\begin{lemma}
Let $\{\cP^{(\theta)}\}_{\theta\in[d+1]}$ be a complete set of MUMs on system $A$.
For arbitrary quantum state $\rho_{AB}$, it holds that
\begin{equation}
\sum_{\theta=1}^{d+1}\Ppg\left(X^{(\theta)}{\Big\vert}B\right)_\omega
= f(\kappa) + g(\kappa)\Fpg\left(A{\vert}B\right)_\rho,
\end{equation}
where $\Ppg(X^{(\theta)}{\vert}B)_\omega$ is the pretty-good guessing probability
of state $\omega_{X^{(\theta)}B}$, and $\Fpg(A{\vert}B)_\rho$
is the pretty-good recoverable entanglement fidelity of state $\rho_{AB}$.
\end{lemma}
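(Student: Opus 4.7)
The plan is to combine Theorem~\ref{thm:mum-ur} with the two operational interpretations of conditional collision entropy recalled in the preliminaries: $\Ppg(X\vert B\Theta)_\omega = 2^{-\Collision(X\vert B\Theta)_\omega}$ for a classical-quantum state, and $\Fpg(A\vert B)_\rho = 2^{-\Collision(A\vert B)_\rho}$ for a general bipartite state. Applying these two identities to the pre-log form Eq.~\eqref{eq:complete-eq-2} of Theorem~\ref{thm:mum-ur} immediately rewrites it as $(d+1)\Ppg(X\vert B\Theta)_\omega = f(\kappa) + g(\kappa)\Fpg(A\vert B)_\rho$, so the entire Lemma reduces to the single identity
\[
(d+1)\Ppg\left(X\vert B\Theta\right)_\omega = \sum_{\theta=1}^{d+1}\Ppg\left(X^{(\theta)}\vert B\right)_\omega,
\]
i.e.\ that when the quantum side information carries a classical flag for which measurement was performed, the joint pretty-good guessing probability factorises into an average of per-measurement guessing probabilities.

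To prove this decomposition I would exploit the fact that $\Theta$ is a classical register in $\omega_{XB\Theta}$. Summing Eq.~\eqref{eq:cq-state} over $x$ and using $\sum_x P_x^{(\theta)} = \1_A$ shows that the side-information marginal has the product form $\omega_{B\Theta} = \tfrac{1}{d+1}\rho_B \otimes \1_\Theta$, and hence $\omega_{B\Theta}^{-1/2} = \sqrt{d+1}\,\rho_B^{-1/2} \otimes \1_\Theta$. Plugging this into the pretty-good measurement $M_x = \omega_{B\Theta}^{-1/2}\omega_x^{B\Theta}\omega_{B\Theta}^{-1/2}$, where $\omega_x^{B\Theta}$ denotes the unnormalised $x$-block of $\omega_{XB\Theta}$, yields an operator that is block-diagonal in $\Theta$ whose $\theta$-block is precisely $\rho_B^{-1/2}\tr_A[P_x^{(\theta)}\rho_{AB}]\rho_B^{-1/2}$---exactly the pretty-good measurement operator of the single cq-state $\omega_{X^{(\theta)}B}$. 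Substituting into $\Ppg(X\vert B\Theta)_\omega = \sum_x \tr[M_x \omega_x^{B\Theta}]$, the orthogonal $\Theta$-projectors isolate one $\theta$ per term in the resulting double sum, giving exactly $\tfrac{1}{d+1}\sum_\theta \Ppg(X^{(\theta)}\vert B)_\omega$ and completing the argument.

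The main obstacle I expect is conceptual rather than computational: one must recognise that the pretty-good measurement respects the classical block structure of $\omega_{B\Theta}$, so that reading the classical register $\Theta$ is ``free'' for Bob's pretty-good strategy and decouples across $\theta$. This is not a property of general POVMs but is automatic for the pretty-good construction once one observes that $\omega_{B\Theta}^{-1/2}$ acts as the identity on $\Theta$ up to the overall scalar $\sqrt{d+1}$. Everything else in the argument is routine operator algebra together with careful bookkeeping of the $1/(d+1)$ normalisation, which appears once in the definition of $\omega_{XB\Theta}$ and once in $\omega_{B\Theta}^{-1/2}$ and must combine to give the claimed factor.
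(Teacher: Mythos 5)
Your proposal is correct and follows essentially the same route as the paper: apply the operational interpretations of the two conditional collision entropies to the pre-logarithm identity of Theorem~\ref{thm:mum-ur}, then reduce the claim to the decomposition $\Ppg(X\vert B\Theta)_\omega = \frac{1}{d+1}\sum_\theta \Ppg(X^{(\theta)}\vert B)_\omega$. The only difference is that the paper dismisses this last step as trivial (because $\omega_{XB\Theta}$ is a uniform mixture of the $\omega_{X^{(\theta)}B}$ with the flag $\Theta$ available to Bob), whereas you verify it explicitly via the block structure of the pretty-good measurement---a worthwhile elaboration, but not a different argument.
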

\begin{proof}
Using the operational interpretations of $\Collision\left(X\vert B\Theta\right)_\omega$ and
$\Collision\left(A\vert B\right)_\rho$, we obtain from Eq.~\eqref{eq:complete-eq-2} that
\[
(d+1)\Ppg\left(X\vert B\Theta\right)_\omega = f(\kappa) + g(\kappa)\Fpg\left(A\vert B\right)_\rho.
\]
Now all we need to show is the following equality
\[
\Ppg\left(X{\vert}B\Theta\right)_\omega 
= \frac{1}{d+1}\sum_{\theta=1}^{d+1}\Ppg\left(X^{(\theta)}{\Big\vert}B\right)_\omega,
\]
where the LHS. is evaluated on state $\omega_{XB\Theta}$, while the RHS. is
evaluated on the states $\omega_{X^{(\theta)}B}$.
This is trivial since $\omega_{XB\Theta}$ is a uniform mixture of the states $\omega_{X^{(\theta)}B}$.
\end{proof}

\subsection{Uncertainty relations expressed in sum of entropies}
Uncertainty relations are commonly expressed as lower bounds on the sum
of entropies of the probability distributions induced by incompatible measurements. 
Using Eq.~\eqref{eq:complete-eq}, we can derive
a uncertainty relation of such kind in terms of sum of collision entropies.
\begin{lemma}\label{lemma:sum-of-entropies}
Let $\{\cP^{(\theta)}\}_{\theta\in[d+1]}$ be a complete set of MUMs on system $A$.
For arbitrary quantum state $\rho_{AB}$, it holds that
\begin{align}
&\quad\; \frac{1}{d+1}\sum_{\theta=1}^{d+1}\Collision\left(X^{(\theta)}{\Big\vert}B\right)_\omega\notag \\
&\geq \log\left(d + 1 \right) - \log\left(f(\kappa) + g(\kappa)2^{-\Collision\left(A\vert B\right)_\rho} \right),
        \label{eq:sum-of-entropies}
\end{align}
where $\omega_{X^{(\theta)}B}$ is defined in Eq.~\eqref{eq:omega-x-theta-b},
coefficients $f(\kappa)$ and $g(\kappa)$ are defined in Eq.~\eqref{eq:mum-generalization}.
\end{lemma}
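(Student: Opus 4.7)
The plan is to combine Theorem~\ref{thm:mum-ur} with a convexity argument applied to the operational interpretation of the classical--quantum conditional collision entropy. Specifically, I would exploit the fact that $\omega_{XB\Theta}$ is by construction a uniform convex mixture of the branches $\omega_{X^{(\theta)}B}$, one for each choice of MUM, indexed by the classical register $\Theta$.

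First I would rewrite $\Collision(X\vert B\Theta)_\omega$ in terms of pretty-good guessing probabilities. Using the operational interpretation recalled in the preliminaries and the same observation used in the proof of the guessing-game lemma,
\begin{equation*}
2^{-\Collision(X\vert B\Theta)_\omega}
= \Ppg(X\vert B\Theta)_\omega
= \frac{1}{d+1}\sum_{\theta=1}^{d+1}\Ppg(X^{(\theta)}\vert B)_\omega
= \frac{1}{d+1}\sum_{\theta=1}^{d+1} 2^{-\Collision(X^{(\theta)}\vert B)_\omega}.
\end{equation*}
Next I would apply Jensen's inequality to the convex function $-\log$ with weights $1/(d+1)$ and with $p_\theta = 2^{-\Collision(X^{(\theta)}\vert B)_\omega}$. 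This gives
\begin{equation*}
\Collision(X\vert B\Theta)_\omega
= -\log\!\left(\frac{1}{d+1}\sum_{\theta}p_\theta\right)
\leq \frac{1}{d+1}\sum_{\theta=1}^{d+1}(-\log p_\theta)
= \frac{1}{d+1}\sum_{\theta=1}^{d+1}\Collision(X^{(\theta)}\vert B)_\omega.
\end{equation*}
Finally I would invoke Theorem~\ref{thm:mum-ur} to replace the left-hand side by the explicit expression $\log(d+1) - \log(f(\kappa)+g(\kappa)2^{-\Collision(A\vert B)_\rho})$, which yields Eq.~\eqref{eq:sum-of-entropies}.

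I do not expect a genuine obstacle here: the only thing to be careful about is the direction of the inequality, and in particular confirming that $-\log$ (not $2^{-x}$) is the convex function to which Jensen must be applied, since this is what flips the average of the collision entropies from an upper bound on $\Collision(X\vert B\Theta)_\omega$ (which is what the concavity of $2^{-x}$ would give in the wrong direction) into a lower bound. All the remaining ingredients---the mixture structure of $\omega_{XB\Theta}$ and the operational identity for $\Ppg$---are either displayed in the statement of Theorem~\ref{thm:mum-ur} or already established in the preceding lemma.
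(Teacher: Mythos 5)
Your proof is correct and follows essentially the same route as the paper's: both hinge on the identity $2^{-\Collision(X\vert B\Theta)_\omega}=\frac{1}{d+1}\sum_{\theta}2^{-\Collision(X^{(\theta)}\vert B)_\omega}$ followed by convexity of $-\log$ (Jensen) and Theorem~\ref{thm:mum-ur}; the paper merely reads that identity off Eq.~\eqref{eq:classical-quantum-collision} directly instead of detouring through the pretty-good guessing probabilities. (One cosmetic slip in your aside: $2^{-x}$ is convex, not concave, but this does not affect the step you actually perform.)
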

\begin{proof}
As the $\log$ function is concave, from Eq.~\eqref{eq:classical-quantum-collision} we get
\begin{align*}
&\quad\; \Collision\left(X\vert B\Theta\right)_\omega \\
&\leq \frac{1}{d+1}\sum_{\theta=1}^{d+1}
 - \log\left(\sum_{x=1}^{d}
\tr_B\left\{\tr_A[P_x^{(\theta)}\widetilde{\rho}_{AB}]^2\right\}\right) \\
&= \frac{1}{d+1}\sum_{\theta=1}^{d+1}\Collision\left(X^{(\theta)}{\vert}B\right)_\omega.
\end{align*}
Together with Eq.~\eqref{eq:complete-eq}, we prove this lemma.
\end{proof}
If system $B$ is trivial, Eq.~\eqref{eq:sum-of-entropies} reduces to
\begin{align*}
&\quad\; \frac{1}{d+1}\sum_{\theta=1}^{d+1}\Collision\left(X^{(\theta)}\right)_\omega \\
&\geq \log\left(d + 1 \right) - \log\left(f(\kappa) + g(\kappa)\tr[\rho_A^2] \right).
\end{align*}
This inequality recovers a special case ($\alpha=2$) of \textbf{Theorem 4}
in~\cite{chen2015uncertainty}.

\subsection{Entanglement detection}

Entanglement is appealing feature of quantum mechanics and has been extensively investigated in the past
decades~\cite{horodecki2009quantum}. Entangled states play important roles in many quantum tasks, such as quantum
teleportation~\cite{bennett1993teleporting} and dense coding~\cite{bennett1992communication}. Deciding whether a given
quantum state is entangled is a central problem in quantum information theory and known to be computationally
intractable in general~\cite{gurvits2004classical}. Experimenters often need easy-to-implement methods to verify that
their source is indeed producing entangled states~\cite{guhne2009entanglement}.

Lemma~\ref{lemma:sum-of-entropies} offers a simple strategy for detecting entanglement since it connects entanglement
to uncertainty, while the latter is experimentally measurable. We show that for separable states, the sum of entropies
induced by complete set of MUMs has a larger lower bound, compared to that of entangled states. This bound serves as an
entanglement witness, as any state violates this bound must be necessarily entangled.
\begin{lemma}\label{lemma:entanglement-detection}
Let $\{\cP^{(\theta)}\}_{\theta\in[d+1]}$ be a complete set of MUMs on system $A$,
and let $\{\cQ^{(\theta)}\}_{\theta\in[d+1]}$ be an arbitrary set of $d+1$ measurements on system $B$.
For arbitrary separable quantum state $\rho_{AB}$, it holds that
\begin{align}
&\quad\; \frac{1}{d+1}\sum_{\theta=1}^{d+1}\Collision\left(X^{(\theta)}{\Big\vert} Y^{(\theta)}\right)_\omega\notag \\
&\geq \log\left(d + 1 \right) - \log\left(f(\kappa) + g(\kappa)\right),\label{eq:entanglement-detection-2}
\end{align}
where $\omega_{X^{(\theta)}Y^{(\theta)}}$ is defined as
\begin{equation*}
\omega_{X^{(\theta)}Y^{(\theta)}} = \sum_{x,y=1}^d
  \tr\left[\left(P_x^{(\theta)}\otimes Q_y^{(\theta)}\right)\rho_{AB}\right]
  \proj{x}\otimes\proj{y}.
\end{equation*}
\end{lemma}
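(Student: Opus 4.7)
The plan is to reduce the claim to Lemma~\ref{lemma:sum-of-entropies} in two steps: a data-processing step that lets us discard the measurement performed on system $B$, and a separability argument that removes the dependence on $\rho_{AB}$ from the right-hand side.

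For the first step, for each $\theta$ performing the POVM $\cQ^{(\theta)}$ on $B$ and storing the outcome in a classical register $Y^{(\theta)}$ is a quantum channel $\cN^{(\theta)}_{B\to Y^{(\theta)}}$ that maps the classical-quantum state $\omega_{X^{(\theta)}B}$ of Eq.~\eqref{eq:omega-x-theta-b} to $\omega_{X^{(\theta)}Y^{(\theta)}}$. Since the conditional collision entropy is non-decreasing under channels applied to the conditioning system, $\Collision(X^{(\theta)}\vert Y^{(\theta)})_\omega \geq \Collision(X^{(\theta)}\vert B)_\omega$. Averaging over $\theta$ and invoking Lemma~\ref{lemma:sum-of-entropies} yields
\begin{equation*}
\frac{1}{d+1}\sum_{\theta=1}^{d+1}\Collision(X^{(\theta)}\vert Y^{(\theta)})_\omega
\geq \log(d+1) - \log\!\bigl(f(\kappa) + g(\kappa)\,2^{-\Collision(A\vert B)_\rho}\bigr).
\end{equation*}

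Because $g(\kappa)\geq 0$ by Eq.~\eqref{eq:coefficients} and $\log$ is monotonic, it now suffices to prove that $2^{-\Collision(A\vert B)_\rho} \leq 1$ for every separable $\rho_{AB}$. By the operational interpretation recalled in the preliminaries, this quantity equals $\Fpg(A\vert B)_\rho = d_A\cdot\Fid\!\bigl((\1_A\otimes\cR^{\opn{pg}}_{B\to A^\prime})(\rho_{AB}),\proj{\Psi_{AA^\prime}}\bigr)$. The local recovery acts only on $B$, so it maps a separable $\rho_{AB}$ to a separable state on $AA^\prime$. Any separable state on $AA^\prime$ has overlap at most $1/d_A$ with the maximally entangled state $\proj{\Psi_{AA^\prime}}$ (the bound $\vert\langle\Psi_{AA^\prime}\vert\alpha\rangle\vert^2\vert\langle\beta\vert\alpha\rangle\vert^2 \leq 1/d_A$ on pure product states extends by convexity), so $\Fpg(A\vert B)_\rho \leq 1$, as required.

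The main obstacle is the second step, in particular checking that the pretty-good recovery preserves separability (immediate because it acts only on $B$) and invoking the well-known fidelity bound for separable states with the maximally entangled state. Granted these two facts, substituting $2^{-\Collision(A\vert B)_\rho} \leq 1$ into the displayed inequality above gives exactly Eq.~\eqref{eq:entanglement-detection-2}, completing the proof.
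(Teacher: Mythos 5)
Your proof is correct and follows the paper's argument exactly: data processing on the conditioning system to reduce $\Collision(X^{(\theta)}\vert Y^{(\theta)})_\omega$ to $\Collision(X^{(\theta)}\vert B)_\omega$, then Lemma~\ref{lemma:sum-of-entropies}, then the bound $2^{-\Collision(A\vert B)_\rho}\leq 1$ for separable states combined with $g(\kappa)\geq 0$. The only difference is that where the paper simply cites \cite{berta2014entanglement} for the non-negativity of the conditional collision entropy on separable states, you derive it from the operational interpretation $2^{-\Collision(A\vert B)_\rho}=\Fpg(A\vert B)_\rho$ together with the standard overlap bound for separable states against the maximally entangled state --- a valid, self-contained substitute (though your displayed pure-product overlap is garbled and should read $\vert\bra{\Psi_{AA^\prime}}(\ket{\alpha}\otimes\ket{\beta})\vert^2\leq 1/d_A$).
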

\begin{proof}
We use Lemma~\ref{lemma:sum-of-entropies}. It holds that
\begin{align*}
&\quad\; \frac{1}{d+1}\sum_{\theta=1}^{d+1}\Collision\left(X^{(\theta)}{\Big\vert} Y^{(\theta)}\right)_\omega  \\
&\geq \frac{1}{d+1}\sum_{\theta=1}^{d+1}\Collision\left(X^{(\theta)}{\Big\vert} B\right)_\omega \\    
&\geq \log\left(d + 1 \right) - \log\left(f(\kappa) + g(\kappa)2^{-\Collision\left(A\vert B\right)_\rho} \right) \\
&\geq \log\left(d + 1 \right) - \log\left(f(\kappa) + g(\kappa)\right).
\end{align*}
The first inequality follows from
the conditional collision entropy satisfies the data-processing inequality~\cite{mueller-lennert2013quantum},
the second inequality is proved in Eq.~\eqref{eq:sum-of-entropies},
while the last inequality follows from the fact that
all separable states have non-negative collision entropy~\cite{berta2014entanglement}.
\end{proof}
Lemma~\ref{lemma:entanglement-detection} can be used to detect entanglement.
Given a bipartite state $\rho_{AB}\in\density{\cH_A\otimes\cH_B}$, 
Alice performs complete set of MUMs $\cP^{(\theta)}$ on system $A$,
while for each $\theta$ Bob performs a corresponding measurement $\cQ^{(\theta)}$ on system $B$.
They then evaluate the classical collision entropies
$\Collision(X^{(\theta)}{\vert} Y^{(\theta)})$. State $\rho_{AB}$ is entangled if 
\begin{equation*}
\frac{1}{d+1}\sum_{\theta=1}^{d+1}\Collision\left(X^{(\theta)}{\Big\vert} Y^{(\theta)}\right)
< \log\left(d + 1 \right) - \log\left(f(\kappa) + g(\kappa)\right).
\end{equation*}
We remark that the choice of measurements $\cQ^{(\theta)}$ is arbitrary.
For best detection criterion, one can minimize the LHS. of Eq.~\eqref{eq:entanglement-detection-2}
by optimizing over all possible measurements on system $B$.


\section{Uncertainty relations for SIM}\label{sec:sim}

In this section, we show that a similar equality relation in the presence of memory exists 
for generalized symmetric informationally complete measurements.


A set of $d^2$ positive-semidefinite operators $\{P_x\}_{x\in[d^2]}$ in $\cH$ is called a 
\textit{generalized symmetric informationally complete measurement} (SIM) if~\cite{gour2014construction}
\begin{itemize}
\item It is a POVM: $P_x \geq 0$ and $\sum_{x=1}^{d^2} P_x = \1$; and,
\item It is symmetric: $\forall x\in[d^2]$, $\tr[P_x^2]=\eta$,
      $\forall x \neq y$, $\tr[P_xP_y]= \frac{1-\eta d}{d(d^2-1)}$.
\end{itemize}
$\eta$ is the \textit{efficiency parameter} defining the ``type'' of a general SIM, whose range is $1/d^3 < \eta
\leq 1/d^2$. 
There exists a general method to construct the set of \textit{all} generalized SIMs~\cite{gour2014construction}.
In~\cite{graydon2016quantum}, it is proved that every SIM forms a conical 2-design
\begin{equation}\label{eq:sic-generalization}
\sum_{x=1}^{d^2} P_{x} \otimes P_{x} = l(\eta)\1_{AA^\prime} + r(\eta)\bF_{AA^\prime},
\end{equation}
where the coefficients are given by
\begin{equation}\label{eq:coefficients-lr}
l(\kappa) = \frac{1- d\eta}{d^2-1}, \quad
r(\kappa) = \frac{d^3\eta-1}{d(d^2-1)}.
\end{equation}


Let $\cP=\{P_x\}_{x\in[d^2]}$ be a generalized SIM on system $A$,
and $\rho_{AB}$ be quantum state on $AB$. Measuring $\rho_{AB}$ on $A$ by $\cP$, 
we obtain the following classical-quantum state:
\begin{equation}\label{eq:sim-post-state}
  \omega_{XB} = \sum_{x=1}^{d^2} \proj{x} \otimes \tr_A\left[\left(P_x \otimes \1_B\right) \rho_{AB} \right],
\end{equation}
where $\cH_X$ is a $d^2$-dimensional Hilbert space with $\{\ket{x}\}$ being its standard basis. Classical register $X$
indicates which measurement operator is performed; $\tr_A[P_x\rho_{AB}]$ is the post-measurement state (unnormalized)
left in system $B$, conditioned on the measurement operator performed; and $\tr[P_x\rho_{AB}]$ is probability that the
measurement outcome is $x$. We are now ready to present an equality relation for SIM with collision entropy.
\begin{theorem}\label{thm:sim-ur}
Let $\cP=\{P_x\}_{x\in[d^2]}$ be a SIM on system $A$.
For arbitrary quantum state $\rho_{AB}$, it holds that
\begin{equation}\label{eq:complete-sim}
\Collision\left(X{\vert}B\right)_\omega
= - \log\left[ l(\eta) + r(\eta)2^{-\Collision\left(A{\vert}B\right)_\rho} \right],
\end{equation}
where $\omega_{XB}$ is defined in Eq.~\eqref{eq:sim-post-state},
$l(\eta)$ and $r(\eta)$ are defined in Eq.~\eqref{eq:coefficients-lr}.
\end{theorem}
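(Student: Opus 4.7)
The plan is to mirror the proof of Theorem~\ref{thm:mum-ur}, but without the averaging over $\theta$, since here we deal with a single SIM rather than a family of $d+1$ MUMs. The structure of the argument is essentially parallel: we rewrite $2^{-\Collision(X|B)_\omega}$ as a trace, apply the ``swap trick'' to turn a sum of squared partial traces into an expectation of $\bF_{BB'}$, invoke the conical 2-design identity~\eqref{eq:sic-generalization} to replace $\sum_x P_x \otimes P_x$ by $l(\eta)\1_{AA'} + r(\eta)\bF_{AA'}$, and then evaluate the two resulting trace expressions.

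First, by the same derivation as for Eq.~\eqref{eq:classical-quantum-collision} (now without the $\Theta$ register and without the factor of $1/(d+1)$), one obtains
\begin{equation*}
2^{-\Collision(X|B)_\omega} = \sum_{x=1}^{d^2} \tr_B\bigl\{\tr_A[P_x \widetilde{\rho}_{AB}]^2\bigr\},
\end{equation*}
where $\widetilde{\rho}_{AB} = \rho_B^{-1/4}\rho_{AB}\rho_B^{-1/4}$. Applying the swap trick $\tr[MN] = \tr[(M\otimes N)\bF_{BB'}]$ with $M = N = \tr_A[P_x\widetilde{\rho}_{AB}]$ and introducing $A'\cong A$, $B'\cong B$ turns the right-hand side into
\begin{equation*}
\tr_{AA'BB'}\!\left[\Bigl(\sum_{x} P_x \otimes P_x\Bigr)(\widetilde{\rho}_{AB}\otimes\widetilde{\rho}_{A'B'})\bF_{BB'}\right].
\end{equation*}

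Next I would substitute the conical 2-design identity from Eq.~\eqref{eq:sic-generalization} to split this quantity into an $l(\eta)$-term and an $r(\eta)$-term. The $l(\eta)$-term equals
\begin{equation*}
\tr_{BB'}\!\left[(\tr_A\widetilde{\rho}_{AB}\otimes\tr_{A'}\widetilde{\rho}_{A'B'})\bF_{BB'}\right] = \tr_B[\rho_B^{1/2}\rho_B^{1/2}] = 1,
\end{equation*}
exactly as in the MUM calculation. The $r(\eta)$-term, containing $\bF_{AA'}$ as well as $\bF_{BB'}$, collapses (using the identity $\tr[\bF_{AA'}(X \otimes Y)] = \tr[XY]$ applied on both $AA'$ and $BB'$) to $\tr[\widetilde{\rho}_{AB}^2] = 2^{-\Collision(A|B)_\rho}$. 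Combining these two evaluations gives
\begin{equation*}
2^{-\Collision(X|B)_\omega} = l(\eta) + r(\eta)\, 2^{-\Collision(A|B)_\rho},
\end{equation*}
from which Eq.~\eqref{eq:complete-sim} follows by taking $-\log$.

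There is no real obstacle beyond bookkeeping: the proof is structurally simpler than Theorem~\ref{thm:mum-ur} because there is no $(d+1)^{-1}\sum_\theta$ prefactor. The only spot requiring mild care is the analogue of Eq.~\eqref{eq:classical-quantum-collision} for the SIM-induced classical-quantum state, which can be justified verbatim along the lines of Appx.~\ref{appx:classical-quantum-collision}, and the routine verification that the swap operators on the $A$ and $B$ sides can be applied independently so that both terms reduce to the clean expressions above.
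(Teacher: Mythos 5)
Your proposal is correct and follows exactly the route the paper intends: the paper's own "proof" is the one-line remark that the argument is identical to that of Theorem~\ref{thm:mum-ur} with the conical 2-design identity~\eqref{eq:sic-generalization} substituted, which is precisely what you carry out (swap trick, 2-design substitution, and evaluation of the $l(\eta)$ and $r(\eta)$ terms to $1$ and $2^{-\Collision(A\vert B)_\rho}$ respectively).
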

The proof of Thm.~\ref{thm:sim-ur} is identical to that of Thm.~\ref{thm:mum-ur}, with
Eq.~\eqref{eq:sic-generalization} substituted appropriately. 
Now we discuss some consequences of Thm.~\ref{thm:sim-ur}.
When $\eta=1/d^2$, which is the case of symmetric informationally complete measurements~\cite{renes2004symmetric},
Eq.~\eqref{eq:complete-sim} becomes
\[
\Collision\left(X{\vert}B\right)_\omega
= \log\left[ d(d+1) \right] - \log\left[ 1 + 2^{-\Collision\left(A{\vert}B\right)_\rho}\right],
\]
which is exactly the \textbf{Corollary 2} proved in~\cite{berta2014entanglement}.
Trivializing system $B$, Eq.~\eqref{eq:complete-sim} reduces to
\begin{equation}\label{eq:complete-sim-2}
\Collision\left(X\right)_\omega
= \log\frac{d(d^2-1)}{(d^3\eta - 1)\tr[\rho_A^2] + (1-d\eta)d}.
\end{equation}
This is an equality relation for SIM without quantum memory.
Eq.~\eqref{eq:complete-sim-2} recovers and tightens a special case ($\alpha=2$) of \textbf{Proposition 3}
in~\cite{rastegin2014notes}.
We remark that Eq.~\eqref{eq:complete-sim} can also be used to detect entanglement,
using the fact that all separable states have non-negative collision entropy~\cite{berta2014entanglement}.

\section{Conclusions}\label{sec:conclusion}

In summary, we derive several uncertainty relations in the presence of quantum memory for different set of
measurements. Our results are generalizations and extensions of~\cite{berta2014entanglement}. In that paper,
uncertainty relations in the presence of quantum memory was formulated for MUBs using the conditional
collision entropy. In this paper, we prove an equality between the amount of uncertainty for a set of measurements and
the amount of entanglement of the measured state, both of which are quantified by the conditional collision entropy
(Thm.~\ref{thm:mum-ur}). Our result relies on the fact that complete set of mutually unbiased measurements forms a
conical 2-design. Several implications of this equality relation are discussed, among which the entanglement detection
method may be of interest from the experiment's point of view. Using similar techniques, we further prove an
equality relation for generalized symmetric informationally complete measurements
(Thm.~\ref{thm:sim-ur}). By investigating the relation between the construction of complete set of MUMs and the conical
2-design, we derive an interesting equality for arbitrary orthogonal basis of the space of Hermitian, traceless
operators (Lemma~\ref{lemma:operator-basis}). This equality may be helpful for studying conical designs. We hope our
results can shed lights on the study of MUMs and inspire new relations quantifying the relation between
uncertainty and entanglement.

\textit{Acknowledgments.}
This work is supported by the National Natural Science Foundation of China (Grant No. 61300050) and the Chinese
National Natural Science Foundation of Innovation Team (Grant No. 61321491).
\appendix

\section{Correctness of Eq.~\ref{eq:classical-quantum-collision}}\label{appx:classical-quantum-collision}

Here we prove Eq.~\ref{eq:classical-quantum-collision} in the main text.
We shall first compute $\omega_{B\Theta}$:
\begin{align*}
&\quad\; \omega_{B\Theta}= \tr_X \omega_{XB\Theta}    \\
&= \frac{1}{d+1}\sum_{\theta=1}^{d+1} 
    \left(\sum_{x=1}^{d} \tr_A[P_x^{(\theta)}\rho_{AB}]\right) \otimes \proj{\theta}_\Theta \\
&= \rho_B \otimes \frac{1}{d+1}\sum_{\theta=1}^{d+1}  \proj{\theta}_\Theta
= \rho_B\otimes\pi_\Theta.
\end{align*}
Then
\begin{widetext}
\begin{align*}
\Collision\left(X\vert B\Theta\right)_\omega 
&= -\log\tr\left\{ \left(\omega_{B\Theta}^{-1/4}\omega_{XB\Theta}\omega_{B\Theta}^{-1/4}\right)^2 \right\}    \\
&= -\log\tr\left\{\left( 
    \frac{1}{\sqrt{d+1}}\sum_{\theta,x}\proj{x}_X \otimes 
    \rho_B^{-1/4} \tr_A[P_x^{(\theta)}\rho_{AB}] \rho_B^{-1/4}
    \otimes \proj{\theta}_\Theta \right)^2\right\}   \\  
&= -\log\tr\left\{\frac{1}{d+1}\sum_{\theta,x}
    \left(\rho_B^{-1/4}\tr_A[P_x^{(\theta)}\rho_{AB}]\rho_B^{-1/4}\right)^2 
    \right\}  
= -\log\left\{ 
    \frac{1}{d+1}\sum_{\theta,x} \tr_B\left[\tr_A[\widetilde{\rho}_{AB}P_x^{(\theta)}]^2 \right]\right\}.
\end{align*}
\end{widetext}
The third and fourth equality follows from that $\{\ket{x}\}$ and $\{\ket{\theta}\}$ are orthogonal,
the last equality follows because $P_x^{(\theta)}$ only affects on system $A$, while $\rho^{-1/4}_B$
only affects system $B$.

\section{An equality for operator basis}\label{appx:operator-basis}

Based on the fact that a complete set of MUMs forms a canonical 2-design,
we prove an interesting equality for arbitrary orthogonal
basis for traceless hermitian operators acting on $\cH_A$.
This equality has a similar form to the canonical 2-design.
\begin{lemma}\label{lemma:operator-basis}
Let $\{F_k\}_{k\in[d^2-1]}$ be arbitrary orthogonal basis for the
space of Hermitian, traceless operators acting on $\cH_A$. It holds that
\begin{equation}
\sum_{k=1}^{d^2-1} F_{k} \otimes F_{k} = \bF_{AA^\prime} - \frac{1}{d}\1_{AA^\prime},
\end{equation}
where $A^\prime \cong A$ and $\bF_{AA^\prime}$ is the swap operator defined in Eq.~\eqref{eq:swap-operator}.
\end{lemma}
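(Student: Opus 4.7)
The plan is to compute $\sum_{\theta,x} F_x^{(\theta)}\otimes F_x^{(\theta)}$ in two different ways using the MUMs machinery from Sec.~\ref{sec:preliminaries}, and to equate the two expressions.

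First, I would substitute $P_x^{(\theta)} = \frac{1}{d}\1 + tF_x^{(\theta)}$ into the conical 2-design relation $\sum_{\theta,x}P_x^{(\theta)}\otimes P_x^{(\theta)} = f(\kappa)\1_{AA^\prime} + g(\kappa)\bF_{AA^\prime}$. Expanding the tensor squares and using the POVM condition $\sum_x F_x^{(\theta)} = 0$ for each $\theta$ to kill all cross terms linear in $F_x^{(\theta)}$, I would arrive at an identity of the form $\frac{d+1}{d}\1_{AA^\prime} + t^2\sum_{\theta,x}F_x^{(\theta)}\otimes F_x^{(\theta)} = f(\kappa)\1_{AA^\prime} + g(\kappa)\bF_{AA^\prime}$. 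Isolating the quadratic sum and simplifying with Eq.~\eqref{eq:coefficients} and Eq.~\eqref{eq:t-kappa}, the coefficients collapse cleanly to produce $\sum_{\theta,x}F_x^{(\theta)}\otimes F_x^{(\theta)} = d(1+\sqrt{d})^2\left[\bF_{AA^\prime} - \tfrac{1}{d}\1_{AA^\prime}\right]$.

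Next, I would compute the same sum directly from the definition in Eq.~\eqref{eq:f_x_theta}. For each fixed $\theta$, expanding $F_x^{(\theta)}\otimes F_x^{(\theta)}$ yields three types of terms: multiples of $F^{(\theta)}\otimes F^{(\theta)}$, of $F^{(\theta)}\otimes F_{x,\theta} + F_{x,\theta}\otimes F^{(\theta)}$, and of $F_{x,\theta}\otimes F_{x,\theta}$. Summing over $x\in[d]$ and collecting the coefficient of $F^{(\theta)}\otimes F^{(\theta)}$, I expect this coefficient to vanish identically, leaving only a multiple of $\sum_{x=1}^{d-1} F_{x,\theta}\otimes F_{x,\theta}$. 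Since the double index $(x,\theta)$ is merely a relabeling of $k\in[d^2-1]$, summing over $\theta$ then yields $d(1+\sqrt{d})^2 \sum_{k=1}^{d^2-1} F_k\otimes F_k$.

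Equating the two expressions and canceling the common prefactor delivers the claimed identity. The main technical obstacle is verifying that the coefficient of $F^{(\theta)}\otimes F^{(\theta)}$ in the direct expansion vanishes; without this cancellation the final answer would depend on the arbitrary choice of how $\{F_k\}$ is arranged into the $(d+1)\times(d-1)$ block matrix, which would be incompatible with the basis-independent form of the claim. This vanishing rests on a numerical identity among $d-1$, $\sqrt{d}$ and $1+\sqrt{d}$, which is precisely the same identity that makes the Kalev--Gour construction in Eq.~\eqref{eq:f_x_theta} output a valid POVM in the first place.
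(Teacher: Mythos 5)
Your proposal is essentially the paper's own proof: substitute $P_x^{(\theta)} = \frac{1}{d}\1 + tF_x^{(\theta)}$ into the conical 2-design identity, kill the linear terms using $\sum_x F_x^{(\theta)}=0$, reduce $\sum_{\theta,x}F_x^{(\theta)}\otimes F_x^{(\theta)}$ to $(d+\sqrt{d})^2\sum_k F_k\otimes F_k$, and match coefficients via Eq.~\eqref{eq:t-kappa}. The cancellation of the $F^{(\theta)}\otimes F^{(\theta)}$ terms that you correctly flag as the crux is exactly the step the paper dismisses as ``tedious calculation''; note that it (and the POVM condition itself) requires the Kalev--Gour coefficient $(d+\sqrt{d})$ rather than the $d(d+\sqrt{d})$ printed in Eq.~\eqref{eq:f_x_theta}, which appears to be a typo.
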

\begin{proof}
The proof relies heavily on the the construction of complete set of MUMs.
Using the relation between $P_x^{(\theta)}$ and $F_x^{(\theta)}$, we have
\begin{eqnarray*}
&~& \sum_{\theta=1}^{d+1}\sum_{x=1}^{d} P_x^{(\theta)} \otimes P_x^{(\theta)} \\
&=& \sum_{\theta=1}^{d+1}\sum_{x=1}^{d}  \left(\frac{1}{d}\1 + t F_x^{(\theta)}\right) 
                              \otimes \left(\frac{1}{d}\1 + t F_x^{(\theta)}\right)    \\
&=& \frac{1+d}{d}\1_{AA^\prime} 
+ \frac{t}{d}\left(\1_A\otimes\widehat{F}_{A^\prime}+\widehat{F}_{A}\otimes\1_{A^\prime}\right) \\
&+& t^2 \sum_{\theta=1}^{d+1}\sum_{x=1}^{d}  F_x^{(\theta)} \otimes F_x^{(\theta)},
\end{eqnarray*}
where $\widehat{F}$ is defined as $\widehat{F} = \sum_{\theta=1}^{d+1}\sum_{x=1}^{d} F_x^{(\theta)}$.
By definition one has $\widehat{F} = 0$. 
Using the relation between $F_x^{(\theta)}$ and $F_{x,\theta}$, 
it can be shown that (through tedious calculation)
\[
\sum_{\theta=1}^{d+1}\sum_{x=1}^{d} F_x^{(\theta)} \otimes F_x^{(\theta)}
= (d+\sqrt{d})^2\sum_{\theta=1}^{d+1}\sum_{x=1}^{d-1} F_{x,\theta} \otimes F_{x,\theta}.
\]
As $F_{x,\theta}$ are just rearrangements of $F_k$, we have
\begin{eqnarray}
&~& \sum_{\theta=1}^{d+1}\sum_{x=1}^{d} P_x^{(\theta)} \otimes P_x^{(\theta)}\notag \\
&=& \left(1+\frac{1}{d}\right)\1_{AA^\prime} 
  + t^2(d+\sqrt{d})^2\sum_{k=1}^{d^2-1} F_{k} \otimes F_{k} \notag   \\
&=& \left(1+\frac{1}{d}\right)\1_{AA^\prime} 
  + \frac{\kappa d - 1}{d-1}\sum_{k=1}^{d^2-1} F_{k} \otimes F_{k},\label{eq:exact-form}
\end{eqnarray}
where the second equality follows from Eq.~\eqref{eq:t-kappa}. Comparing Eq.~\eqref{eq:mum-generalization} and
Eq.~\eqref{eq:exact-form}, we obtain the following equality for arbitrary orthogonal basis for the space of
Hermitian, traceless operators acting on $\cH_A$:
\begin{equation*}
\sum_{k=1}^{d^2-1} F_{k} \otimes F_{k} = \bF_{AA^\prime} - \frac{1}{d}\1_{AA^\prime}.
\end{equation*}
\end{proof}


\begin{thebibliography}{10}

\bibitem{berta2014entanglement}
Mario Berta, Patrick~J Coles, and Stephanie Wehner.
\newblock Entanglement-assisted guessing of complementary measurement outcomes.
\newblock {\em Physical Review A}, 90(6):062127, 2014.

\bibitem{heisenberg1927w}
W~Heisenberg.
\newblock W. heisenberg, z. phys. 43, 172 (1927).
\newblock {\em Z. Phys.}, 43:172, 1927.

\bibitem{robertson1929uncertainty}
Howard~Percy Robertson.
\newblock The uncertainty principle.
\newblock {\em Physical Review}, 34(1):163, 1929.

\bibitem{bialynicki1975uncertainty}
Iwo Bia{\l}ynicki-Birula and Jerzy Mycielski.
\newblock Uncertainty relations for information entropy in wave mechanics.
\newblock {\em Communications in Mathematical Physics}, 44(2):129--132, 1975.

\bibitem{deutsch1983uncertainty}
David Deutsch.
\newblock Uncertainty in quantum measurements.
\newblock {\em Physical Review Letters}, 50(9):631, 1983.

\bibitem{maassen1988generalized}
Hans Maassen and Jos~BM Uffink.
\newblock Generalized entropic uncertainty relations.
\newblock {\em Physical Review Letters}, 60(12):1103, 1988.

\bibitem{coles2017entropic}
Patrick~J Coles, Mario Berta, Marco Tomamichel, and Stephanie Wehner.
\newblock Entropic uncertainty relations and their applications.
\newblock {\em Reviews of Modern Physics}, 89(1):015002, 2017.

\bibitem{spengler2013graph}
Christoph Spengler and Barbara Kraus.
\newblock Graph-state formalism for mutually unbiased bases.
\newblock {\em Physical Review A}, 88(5):052323, 2013.

\bibitem{cerf2002security}
Nicolas~J Cerf, Mohamed Bourennane, Anders Karlsson, and Nicolas Gisin.
\newblock Security of quantum key distribution using d-level systems.
\newblock {\em Physical Review Letters}, 88(12):127902, 2002.

\bibitem{spengler2012entanglement}
Christoph Spengler, Marcus Huber, Stephen Brierley, Theodor Adaktylos, and
  Beatrix~C Hiesmayr.
\newblock Entanglement detection via mutually unbiased bases.
\newblock {\em Physical Review A}, 86(2):022311, 2012.

\bibitem{durt2010mutually}
Thomas Durt, Berthold-Georg Englert, Ingemar Bengtsson, and Karol
  {\.Z}yczkowski.
\newblock On mutually unbiased bases.
\newblock {\em International Journal of Quantum Information}, 8(04):535--640,
  2010.

\bibitem{kalev2014mutually}
Amir Kalev and Gilad Gour.
\newblock Mutually unbiased measurements in finite dimensions.
\newblock {\em New Journal of Physics}, 16(5):053038, 2014.

\bibitem{klappenecker2005mutually}
Andreas Klappenecker and M~Rotteler.
\newblock Mutually unbiased bases are complex projective 2-designs.
\newblock In {\em International Proceedings of the Symposium on Information
  Theory, ISIT}, pages 1740--1744. IEEE, Psicataway, NJ, 2005.

\bibitem{graydon2016quantum}
Matthew~A Graydon and DM~Appleby.
\newblock Quantum conical designs.
\newblock {\em Journal of Physics A: Mathematical and Theoretical},
  49(8):085301, 2016.

\bibitem{tomamichel2015quantum}
Marco Tomamichel.
\newblock {\em Quantum Information Processing with Finite Resources:
  Mathematical Foundations}, volume~5.
\newblock Springer, 2015.

\bibitem{buhrman2008possibility}
Harry Buhrman, Matthias Christandl, Patrick Hayden, Hoi-Kwong Lo, and Stephanie
  Wehner.
\newblock Possibility, impossibility, and cheat sensitivity of quantum-bit
  string commitment.
\newblock {\em Physical Review A}, 78(2):022316, 2008.

\bibitem{nielsen2011quantum}
Michael~A. Nielsen and Isaac~L. Chuang.
\newblock {\em Quantum {Computation} and {Quantum} {Information}}.
\newblock Cambridge University Press, 2011.

\bibitem{chen2015uncertainty}
Bin Chen and Shao-Ming Fei.
\newblock Uncertainty relations based on mutually unbiased measurements.
\newblock {\em Quantum Information Processing}, 14(6):2227--2238, 2015.

\bibitem{horodecki2009quantum}
Ryszard Horodecki, Pawe{\l} Horodecki, Micha{\l} Horodecki, and Karol
  Horodecki.
\newblock Quantum entanglement.
\newblock {\em Reviews of modern physics}, 81(2):865, June 2009.

\bibitem{bennett1993teleporting}
Charles~H Bennett, Gilles Brassard, Claude Cr{\'e}peau, Richard Jozsa, Asher
  Peres, and William~K Wootters.
\newblock Teleporting an unknown quantum state via dual classical and
  einstein-podolsky-rosen channels.
\newblock {\em Physical Review Letters}, 70(13):1895, 1993.

\bibitem{bennett1992communication}
Charles~H Bennett and Stephen~J Wiesner.
\newblock Communication via one-and two-particle operators on
  einstein-podolsky-rosen states.
\newblock {\em Physical Review Letters}, 69(20):2881, 1992.

\bibitem{gurvits2004classical}
Leonid Gurvits.
\newblock Classical complexity and quantum entanglement.
\newblock {\em Journal of Computer and System Sciences}, 69(3):448--484, 2004.

\bibitem{guhne2009entanglement}
Otfried G{\"u}hne and G{\'e}za T{\'o}th.
\newblock Entanglement detection.
\newblock {\em Physics Reports}, 474(1):1--75, 2009.

\bibitem{mueller-lennert2013quantum}
Martin M{\"u}ller-Lennert, Fr{\'e}d{\'e}ric Dupuis, Oleg Szehr, Serge Fehr, and
  Marco Tomamichel.
\newblock On quantum r{\'e}nyi entropies: A new generalization and some
  properties.
\newblock {\em Journal of Mathematical Physics}, 54(12):122203, 2013.

\bibitem{gour2014construction}
Gilad Gour and Amir Kalev.
\newblock Construction of all general symmetric informationally complete
  measurements.
\newblock {\em Journal of Physics A: Mathematical and Theoretical},
  47(33):335302, 2014.

\bibitem{renes2004symmetric}
Joseph~M Renes, Robin Blume-Kohout, Andrew~J Scott, and Carlton~M Caves.
\newblock Symmetric informationally complete quantum measurements.
\newblock {\em Journal of Mathematical Physics}, 45(6):2171--2180, 2004.

\bibitem{rastegin2014notes}
Alexey~E Rastegin.
\newblock Notes on general sic-povms.
\newblock {\em Physica Scripta}, 89(8):085101, 2014.

\end{thebibliography}
\end{document}